\title{The effect of prudence on the optimal allocation in possibilistic and mixed models}
\author{Irina Georgescu \\ \footnotesize Academy of Economic Studies\\ \footnotesize Department of Economic Cybernetics\\ \footnotesize Pia$\c{t}$a Romana No 6  R 70167, Oficiul Postal 22, Bucharest, Romania
 \footnotesize Email: irina.georgescu@csie.ase.ro}
\date{}
\begin{document}
\maketitle

\begin{abstract}
In this paper two portfolio choice models are studied: a purely possibilistic model, in which the return of a risky asset is a fuzzy number, and a mixed model in which a probabilistic background risk is added.
For the two models an approximate formula of the optimal allocation is computed, with respect to the possibilistic moments associated with fuzzy numbers and the indicators of the investor risk preferences (risk aversion, prudence).

\end{abstract}

\textbf{Keywords}: prudence, optimal allocation, possibilistic moments

\newtheorem{definitie}{Definition}[section]
\newtheorem{propozitie}[definitie]{Proposition}
\newtheorem{remarca}[definitie]{Remark}
\newtheorem{exemplu}[definitie]{Example}
\newtheorem{intrebare}[definitie]{Open question}
\newtheorem{lema}[definitie]{Lemma}
\newtheorem{teorema}[definitie]{Theorem}
\newtheorem{corolar}[definitie]{Corollary}

\newenvironment{proof}{\noindent\textbf{Proof.}}{\hfill\rule{2mm}{2mm}\vspace*{5mm}}

\section{Introduction}

The standard portfolio choice problem \cite{arrow}, \cite{pratt} considers the determination of the optimal proportion of the wealth an agent invests in a risk---free asset and in a risky asset. The study of this probabilistic model is usually done in the classical expected utility theory. The optimal allocation of risky asset appears as a solution of a maximization problem. By Taylor approximations several forms of the solution have been found, depending on different moments of the return of the risky asset, as well as some on indicators of the investors's risk preferences. In the form of the solution from \cite{eeckhoudt2}, Chapter $2$ or \cite{gollier}, Chapter $5$, the mean value, the variance and the Arrow-Pratt index of the investor's utility function appear. The approach from \cite{athayde}, \cite{garlappi} led to forms of the approximate solution which depend on the first three moments, the Arrow--Pratt index $r_u$ and the prudence index $P_u$ \cite{kimball}. The solution found in \cite{niguez} is expressed according to the first four moments and the indicators of risk aversion, prudence and temperance of the utility function. Another form of the solution in which the first four moments appear can be found in \cite{courtois}.

In this paper two portfolio choice models are studied: a purely possibilistic model, in which the return of the risky asset is represented by a fuzzy number \cite{carlsson2}, \cite{dubois} and a mixed model, in which a probabilistic background risk appears. In the formulation of the maximization problem for the first model the possibilistic expected utility from \cite{georgescu}, Definition 4.2.7 is used; in case of the mixed model the notion of mixed expected utility from \cite{georgescu}, Definition, 7.1.1 is used. The approximate solutions of the two models will be expressed by the possibilistic moments associated with a fuzzy number (\cite{carlsson2}, \cite{dubois}, \cite{thavaneswaran1}, \cite{thavaneswaran2}) and by the indicators on the investor risk preferences.

In the first part of Section $2$ the definitions of possibilistic expected utility (cf. \cite{georgescu}) and possibilistic indicators of a fuzzy number (expected value, variance,  moments) are presented. The second part of the section contains the definition of a mixed expected utility associated with a mixed vector, a bidimensional utility function and a weighting function (\cite{georgescu}).

Section $3$ is concerned with the possibilistic standard portfolio-choice model, whose construction is inspired by the probabilistic model of \cite{niguez}. The return of the risky asset is here a fuzzy number, while in \cite{niguez} it is a random variable. The total utility function of the model is written as a possibilistic expected value. The maximization problem of the model and the first order conditions are formulated, from which its optimal solution is determined.

Section $4$ is dedicated to the optimal asset allocation in the framework of the possibilistic portfolio model defined in the previous section. Using a second order  Taylor approximation a formula for the approximate calculation of the maximization problem solution is found. In the component of the formula appear  the first three possibilistic moments, the Arrow-Pratt index and the prudence indices of the investor's utility function. The general formula is particularized  for triangular fuzzy numbers and a HARA utility function.

In Section $5$ we consider a mixed portfolio choice model obtained from the possibilistic standard model by adding a possibilistic background  risk. For the mixed model, risk is represented by a bidimensional mixed vector, a component being a fuzzy number and the other a random variable. The agent will have a unidimensional utility function, but the total utility function will be built as a mixed expected utility. We study a maximization problem whose solution is approximated by a calculation formula depending on the first three moments of a fuzzy number, the expected mean value of a random variable and the risk aversion and prudence indicators of a utility function. In the proof of the calculation formula the linearity property of the mixed expected utility is used.

\section{Preliminaries}

In this section we recall some notions and results on the  possibilistic expected utility, mixed expected utility and (cf. \cite{georgescu}) and some possibilistic indicators associated with fuzzy numbers  (cf. \cite{carlsson1}, \cite{carlsson2}, \cite{fuller}, \cite{thavaneswaran1}, \cite{thavaneswaran2}, \cite{zhang}).

\subsection{Possibilistic expected utility}

We fix a mathematical context consisting of:

$\bullet$ a utility function $u$ of class $\mathcal{C}^2$

$\bullet$  a fuzzy number $A$ whose level sets are $[A]^\gamma=[a_1(\gamma), a_2(\gamma)]$, $\gamma \in [0, 1]$.

$\bullet$ a weighting function $f: [0, 1] \rightarrow {\bf{R}}$. ($f$ is a non-negative and increasing function that satisfies $\int_0^1 f(\gamma) d\gamma=1$).

The {\emph{possibilistic expected utility} associated with the triple $(u, A, f)$ is

$E_f(u(A))=\frac{1}{2} \int_0^1 [u(a_1(\gamma))+u(a_2(\gamma))] f(\gamma)d\gamma$            (2.1.1)

The following possibilistic indicators asoociated with a fuzzy number $A$ and a weighting function $f$ are particular cases  of (2.1.1).

$\bullet$ possibilistic expected value \cite{carlsson1}, \cite{fuller}:

 $E_f(A)=\frac{1}{2} \int_0^1 [a_1(\gamma)+a_2(\gamma)]f(\gamma)d\gamma$   (2.1.2)

 ($u$ is the identity function of $\bf{R}$)

$\bullet$ possibilistic variance \cite{carlsson1}, \cite{zhang}:

$Var_f(A)=\frac{1}{2} \int_0^1 [(u(a_1(\gamma))-E_f(A))^2+(u(a_2(\gamma))-E_f(A))^2]f(\gamma)d\gamma$  (2.1.3)

(for $u(x)=(x-E_f(x))^2, x \in {\bf{R}}$).

$\bullet$ the $n$-th order possibilistic moment  \cite{thavaneswaran1}, \cite{thavaneswaran2}:

$M(A^n)=\frac{1}{2} \int_0^1 [u^n(a_1(\gamma))+u^n(a_2(\gamma))]f(\gamma)d\gamma$ (2.1.4)

(for $u(x)=x^n, x \in {\bf{R}}$).

\begin{propozitie}\cite{georgescu}
Let $g: {\bf{R}} \rightarrow {\bf{R}}$, $h: {\bf{R}} \rightarrow {\bf{R}}$ be two utility functions, $a, b \in {\bf{R}}$ and $u=ag+bh$. Then $E_f(u(A))=aE_f(g(A))+bE_f(h(A))$.
\end{propozitie}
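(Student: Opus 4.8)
The plan is to unwind the definition (2.1.1) of the possibilistic expected utility and then exploit the linearity of the integral. Fix the level sets $[A]^\gamma=[a_1(\gamma),a_2(\gamma)]$ of the fuzzy number $A$. Since $u=ag+bh$ as functions on $\mathbf{R}$, for every $\gamma\in[0,1]$ we have the pointwise identities $u(a_1(\gamma))=a\,g(a_1(\gamma))+b\,h(a_1(\gamma))$ and $u(a_2(\gamma))=a\,g(a_2(\gamma))+b\,h(a_2(\gamma))$, simply by evaluating the function $ag+bh$ at the real numbers $a_1(\gamma)$ and $a_2(\gamma)$.

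Next I would substitute these two identities into (2.1.1), which gives
\[
E_f(u(A))=\frac{1}{2}\int_0^1\bigl[a\,g(a_1(\gamma))+b\,h(a_1(\gamma))+a\,g(a_2(\gamma))+b\,h(a_2(\gamma))\bigr]f(\gamma)\,d\gamma.
\]
Regrouping the integrand according to the coefficients $a$ and $b$ and using that the integral is linear, the right-hand side splits as
\[
a\cdot\frac{1}{2}\int_0^1\bigl[g(a_1(\gamma))+g(a_2(\gamma))\bigr]f(\gamma)\,d\gamma+b\cdot\frac{1}{2}\int_0^1\bigl[h(a_1(\gamma))+h(a_2(\gamma))\bigr]f(\gamma)\,d\gamma.
\]
By (2.1.1) applied to the triples $(g,A,f)$ and $(h,A,f)$, the first integral equals $E_f(g(A))$ and the second equals $E_f(h(A))$, whence $E_f(u(A))=aE_f(g(A))+bE_f(h(A))$, as claimed.

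There is essentially no genuine obstacle here beyond bookkeeping. The only points worth noting are that $g$ and $h$, being utility functions of class $\mathcal{C}^2$, make $E_f(g(A))$ and $E_f(h(A))$ themselves legitimate instances of (2.1.1), and that pulling the finite linear combination out of the integral over the compact interval $[0,1]$ is justified because $f$ is non-negative with $\int_0^1 f(\gamma)\,d\gamma=1$ and the endpoint maps $\gamma\mapsto g(a_i(\gamma))$, $\gamma\mapsto h(a_i(\gamma))$ are measurable (indeed of bounded variation, as compositions of continuous functions with monotone maps). Consequently I expect the actual write-up to occupy only a few lines.
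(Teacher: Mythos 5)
Your argument is correct: the paper states this proposition without proof (citing \cite{georgescu}), and the intended justification is exactly the one you give — substitute the pointwise identity $u(a_i(\gamma))=a\,g(a_i(\gamma))+b\,h(a_i(\gamma))$ into (2.1.1) and split the integral by linearity. Nothing further is needed.
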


\begin{corolar}
$E_f(a+bh(A))=a+bE_f(h(A))$.
\end{corolar}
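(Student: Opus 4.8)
The plan is to obtain the Corollary as an immediate specialization of the preceding Proposition. I would apply that Proposition with $g$ taken to be the constant function $g(x)=1$ for every $x\in\mathbf{R}$, which qualifies as a (degenerate) utility function of class $\mathcal{C}^2$, and with $h$ the given utility function. With $u=ag+bh$ this becomes $u(x)=a+bh(x)$, so the Proposition yields $E_f(u(A))=aE_f(g(A))+bE_f(h(A))$.

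The only computation left is to evaluate $E_f(g(A))$ for the constant function $g\equiv 1$. From formula (2.1.1) one has $E_f(g(A))=\frac{1}{2}\int_0^1[g(a_1(\gamma))+g(a_2(\gamma))]f(\gamma)\,d\gamma=\frac{1}{2}\int_0^1[1+1]f(\gamma)\,d\gamma=\int_0^1 f(\gamma)\,d\gamma$, and since $f$ is a weighting function this last integral equals $1$. Substituting $E_f(g(A))=1$ back into the identity above gives $E_f(a+bh(A))=a\cdot 1+bE_f(h(A))=a+bE_f(h(A))$, which is exactly the claim.

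There is essentially no obstacle here: the entire content is the already-proved linearity Proposition together with the normalization $\int_0^1 f(\gamma)\,d\gamma=1$ that is part of the definition of a weighting function. If one preferred not to route the argument through a constant $g$, an equally short route is to expand $E_f((a+bh)(A))$ directly from (2.1.1), use linearity of the integral to split it into the $a$-term and the $bh$-term, evaluate the $a$-term via $\int_0^1 f=1$, and recognize the remaining term as $bE_f(h(A))$.
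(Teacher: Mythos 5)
Your proof is correct and matches the intended argument: the paper states the Corollary as an immediate consequence of Proposition 2.1 without writing out a proof, and your specialization to the constant function $g\equiv 1$ together with the normalization $\int_0^1 f(\gamma)\,d\gamma=1$ is exactly the standard way to fill in that step.
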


\subsection{Mixed expected utility}

A bidimensional mixed vector has the form $(A, X)$ where $A$ is a fuzzy number and $X$ is a random variable. We will denote by $M(X)$ the expected value of $X$. If $g: {\bf{R}} \rightarrow {\bf{R}}$ is  a continuous function then $M(g(X))$ is the probabilistic expected utility of $X$ w.r.t. $g$.

Let $u: {\bf{R}}^2 \rightarrow {\bf{R}}$ be a bidimensional utility function of class ${\mathcal{C}}^2$, $(A,X)$ a mixed vector and $f: [0, 1] \rightarrow {\bf{R}}$ a weighting function. Assume that the level sets of the fuzzy number $A$ are $[A]^\gamma=[a_1(\gamma), a_2(\gamma)], \gamma \in [0, 1]$. For any $\gamma \in [0, 1]$, we consider the probabilistic expected values $M(u(a_i(\gamma), X)), i=1,2$.

The mixed expected utility associated with the triple $(u, (A, X), f)$ is:

$E_f(u(A, X))=\frac{1}{2} \int_0^1 [M(u(a_1(\gamma), X))+M(u(a_2(\gamma), X))]f(\gamma)d\gamma$   (2.2.1)

\begin{remarca}
If $a \in {\bf{R}}$ then $E_f(u(a, X))=M(u(a, X))$.
\end{remarca}

\begin{propozitie}
\cite{georgescu} Let $g, h$ be two bidimensional utility functions, $a, b \in {\bf{R}}$  and $u=ag+bh$. Then $E_f(u(A, X))=aE_f(g(A, X))+bE_f(h(A, X))$.
\end{propozitie}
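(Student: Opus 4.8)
The plan is to unwind the definition (2.2.1) of the mixed expected utility and to push the scalars $a$ and $b$ through two successive linear operations: the probabilistic expectation $M(\cdot)$ taken at each fixed level $\gamma$, and the integral $\int_0^1(\cdot)f(\gamma)\,d\gamma$ against the weighting function. This is the mixed-model counterpart of the linearity of the purely possibilistic expected utility already recorded earlier in the excerpt, and the same short computation carries it through.

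First I would fix $\gamma\in[0,1]$ and $i\in\{1,2\}$ and note that, since $u=ag+bh$, we have pointwise in the realizations of $X$ the identity $u(a_i(\gamma),X)=a\,g(a_i(\gamma),X)+b\,h(a_i(\gamma),X)$, an equality of random variables (functions of $X$). Applying the linearity of the probabilistic expected value, $M\bigl(\alpha\varphi(X)+\beta\psi(X)\bigr)=\alpha M(\varphi(X))+\beta M(\psi(X))$ for continuous $\varphi,\psi$, this yields
\[
M(u(a_i(\gamma),X))=a\,M(g(a_i(\gamma),X))+b\,M(h(a_i(\gamma),X)),\qquad i=1,2.
\]

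Next I would substitute these two identities into the bracket $[M(u(a_1(\gamma),X))+M(u(a_2(\gamma),X))]$ occurring in (2.2.1), collect the $g$-terms and the $h$-terms separately, and then invoke the linearity of the integral $\int_0^1(\cdot)f(\gamma)\,d\gamma$ to split the single integral into two. The first of these is precisely $\tfrac12\int_0^1[M(g(a_1(\gamma),X))+M(g(a_2(\gamma),X))]f(\gamma)\,d\gamma=E_f(g(A,X))$ by (2.2.1) applied to $g$, and the second equals $E_f(h(A,X))$ likewise; reinstating the scalars $a$ and $b$ carried along throughout gives $E_f(u(A,X))=aE_f(g(A,X))+bE_f(h(A,X))$.

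There is no genuine obstacle here: the claim is exactly the two-stage linearity of an iterated ``expectation, then weighted integral'' operator. The only points that deserve a word of care are that the probabilistic expectations $M(g(a_i(\gamma),X))$ and $M(h(a_i(\gamma),X))$ are assumed to exist, so that the manipulation at fixed $\gamma$ is legitimate, and that the maps $\gamma\mapsto M(g(a_i(\gamma),X))$ and $\gamma\mapsto M(h(a_i(\gamma),X))$ are integrable against $f$ on $[0,1]$; both are part of the standing hypotheses under which the mixed expected utility (2.2.1) is well defined, so no extra assumption is needed.
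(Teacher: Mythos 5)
Your argument is correct: the paper states this proposition with only a citation to \cite{georgescu} and gives no proof of its own, and your two-stage unwinding of definition (2.2.1) --- linearity of the probabilistic expectation $M(\cdot)$ at each fixed level $\gamma$, followed by linearity of the weighted integral over $[0,1]$ --- is exactly the standard argument, mirroring the purely possibilistic case of Proposition 2.1. Your closing remarks on the existence of the expectations and the integrability in $\gamma$ are appropriate and cost nothing.
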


Proposition 2.1 and 2.4 express the linearity of possibilistic expected value and mixed expected utility with respect to the utility functions
which appear in the definitions of these two operators.

\begin{corolar}
If $A$ is a fuzzy number and $Z$ is a random variable then $E_f(AZ)=M(Z)E_f(A)$ and $E_f(A^2Z)=M(Z)E_f(A^2)$.
\end{corolar}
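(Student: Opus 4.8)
The plan is to recognise both equalities as instances of the mixed expected utility (2.2.1), obtained by choosing the bidimensional utility functions $u(a,x)=ax$ and $u(a,x)=a^{2}x$ respectively, and then to pull the deterministic factors $a_i(\gamma)$ and $a_i(\gamma)^{2}$ out of the probabilistic expectation $M$.

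First I would apply (2.2.1) with $u(a,x)=ax$, which is of class $\mathcal C^{2}$, so that $u(A,Z)=AZ$ and
\[
E_f(AZ)=\frac12\int_0^1\bigl[M(a_1(\gamma)Z)+M(a_2(\gamma)Z)\bigr]f(\gamma)\,d\gamma .
\]
The key observation is that for each fixed $\gamma\in[0,1]$ the endpoints $a_1(\gamma)$, $a_2(\gamma)$ of the level set $[A]^{\gamma}$ are real constants, so by linearity of the probabilistic expectation $M(a_i(\gamma)Z)=a_i(\gamma)M(Z)$ for $i=1,2$ (equivalently, this is Remark 2.3 applied to $u(a,x)=ax$). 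Since $M(Z)$ is independent of $\gamma$, it factors out of the weighted integral, giving
\[
E_f(AZ)=M(Z)\cdot\frac12\int_0^1\bigl[a_1(\gamma)+a_2(\gamma)\bigr]f(\gamma)\,d\gamma=M(Z)E_f(A)
\]
by the definition (2.1.2) of the possibilistic expected value. The second identity follows verbatim with $u(a,x)=a^{2}x$: then $M(a_i(\gamma)^{2}Z)=a_i(\gamma)^{2}M(Z)$, and after extracting $M(Z)$ the remaining integral $\tfrac12\int_0^1[a_1(\gamma)^{2}+a_2(\gamma)^{2}]f(\gamma)\,d\gamma$ is exactly $M(A^{2})=E_f(A^{2})$ by (2.1.4). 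Alternatively one could invoke Proposition 2.4 after writing the utility function in the form $a\cdot g(x)$ with $g=\mathrm{id}$, but the direct computation above is shorter.

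I do not anticipate any genuine obstacle here: once the right bidimensional utility functions are selected the argument is essentially a one-line computation. The only steps deserving explicit mention are that the probabilistic expectation is linear in the deterministic coefficients $a_i(\gamma)$ and $a_i(\gamma)^{2}$, and that a $\gamma$-independent constant passes freely through the weighted integral $\tfrac12\int_0^1(\cdot)\,f(\gamma)\,d\gamma$.
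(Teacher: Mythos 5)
Your argument is exactly the paper's own proof: both identities are read off from (2.2.1) with the bidimensional utility functions $u(x,z)=xz$ and $u(x,z)=x^2z$, the constants $a_i(\gamma)$, $a_i(\gamma)^2$ are pulled out of the probabilistic expectation $M$, and the $\gamma$-independent factor $M(Z)$ is extracted from the weighted integral to leave $E_f(A)$ and $E_f(A^2)$ respectively. The proposal is correct and adds only a slightly more explicit justification of the intermediate steps than the paper does.
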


\begin{proof}
We take $u(x, z)=xz$ and applying (2.2.1) we have

$E_f(AZ)=\frac{1}{2} \int_0^1 [M(a_1(\gamma)Z)+M(a_2(\gamma)Z)]f(\gamma)d\gamma=\frac{1}{2} \int_0^1 [a_1(\gamma)M(Z)+a_2(\gamma)M(Z)]f(\gamma)d\gamma=M(Z)E_f(A)$.

Taking $u(x, z)=x^2z$ we obtain

$E_f(A^2Z)=\frac{1}{2} \int_0^1 [M(a_1^2(\gamma)Z)+M(a_2^2(\gamma)Z)]f(\gamma)d\gamma=\frac{1}{2} \int_0^1 [a_1^2(\gamma)M(Z)+a_2^2(\gamma)M(Z)]f(\gamma)d\gamma=M(Z)E_f(A^2)$.
\end{proof}

\section{Possibilistic standard model}

In this section we will present a possibilistic portfolio choice model in which the return of the risky asset is a fuzzy number. In defining the total utility of the model we will use the possibilistic expected utility introduced in the previous section.

We consider an agent (characterized by a utility function $u$ of class $\mathcal{C}^2$, increasing and concave) which invests a wealth $w_0$ in a risk-free asset and in a risky asset. The agent invests the amount $\alpha$ in a risky asset and $w_0-\alpha$ in a risk--free asset. Let $r$ be the return of the risk-free asset and $x$ a value of the return of the risky asset. We denote by $w=w_0(1+r)$ the future wealth of the risk-free strategy, the portfolio value $(w_0-\alpha, \alpha)$ will be (according to \cite{eeckhoudt2}, p. 65-66):

$(w_0-\alpha)(1+r)+\alpha(1+x)=w+\alpha(x-r)$. (3.1)

The probabilistic investment model from \cite{eeckhoudt2}, Chapter $4$ or \cite{gollier}, Chapter $5$ starts from the hypothesis that the return of the risky asset is a random variable $X$. Then $x$ is a value of $X$ and (3.1) leads to the following maximization problem:

$\displaystyle \max_\alpha M[u(w+\alpha(X-r))]$  (3.2)

If we make the assumption that the return of the risky asset is a fuzzy number $B_0$, then $x$ will be a value of $B_0$. To describe the possibilistic model resulting from such a hypothesis, we fix a weighting function $f: [0, 1] \rightarrow {\bf{R}}$. The expression (3.1) suggests to us the following optimization problem:

$\displaystyle \max_\alpha E_f[u(w+\alpha(B_0-r))]$  (3.3)

By denoting with $B=B_0-r$ the excess return, the problem (3.3) becomes:

$\displaystyle \max_\alpha E_f[u(w+\alpha B)]$  (3.4)

Assume that the level sets of the fuzzy number $B$ are $[B]^\gamma=[b_1(\gamma), b_2(\gamma)], \gamma \in [0, 1]$. According to (2.1.1), the total utility function of the model (3.4) will have the following form:

$V(\alpha)=E_f[u(w+\alpha B)]=\frac{1}{2} \int_0^1 [u(w+\alpha b_1(\gamma))+u(w+\alpha b_2(\gamma))]f(\gamma)d\gamma$

Deriving twice one obtains:

$V''(\alpha)=\frac{1}{2} \int_0^1 [b_1^2(\gamma) u''(w+\alpha b_1(\gamma))+b_2^2(\gamma) u''(w+\alpha b_2(\gamma))]f(\gamma)d\gamma$

Since $u'' \leq 0$ it follows $V''(\alpha)\leq 0$ thus $V$ is concave.

We assume everywhere in this paper that the portfolio risk is small, thus analogously with \cite{gollier}, Section 5.2, we can take the excess return $B$ as $B=k\mu +A$ where $\mu >0$ and $A$ is a fuzzy number with $E_f(A)=0$. Of course $E_f(B)=k \mu$ in that case. The total utility $V(\alpha)$ will be written:

$V(\alpha)=E_f[u(w+\alpha(k \mu +A)]$ (3.5)

Assuming that the level sets of $A$ are $[A]^\gamma=[a_1(\gamma), a_2(\gamma)], \gamma \in [0, 1]$, (3.5) becomes:

$V(\alpha)=\frac{1}{2} \int_0^1 [u(w+\alpha(k \mu+a_1(\gamma)))+u(w+\alpha(k \mu+a_2(\gamma)))]f(\gamma)d\gamma$

By deriving one obtains:

$V'(\alpha)=\frac{1}{2} \int_0^1 [(k \mu +a_1(\gamma))u'(w+\alpha(k \mu +a_1(\gamma)))+$

\hspace* {1cm} $(k \mu +a_2(\gamma))u'(w+\alpha(k \mu +a_2(\gamma)))]f(\gamma)d\gamma$

which can be written

$V'(\alpha)=E_f[(k \mu +A)u'(w+\alpha (k\mu +A))]$ (3.6)

Let $\alpha(k)$ be the solution of the maximization problem $\displaystyle \max_\alpha V(\alpha)$, with $V(\alpha)$ being written under the form (3.6). Then the first-order condition $V'(\alpha(k))=0$ will be written:

$E_f[(k\mu +A)u'(w+\alpha(k)(k\mu +A))]=0$  (3.7)

As in \cite{gollier}, Section $5.2$ we will assume that $\alpha(0)=0$.

Everywhere in this paper, we will keep the notations and hypotheses from above.

\section{The effect of prudence on the optimal allocation}

The main result of the section is a formula for the approximate calculation of the solution $\alpha(k)$ of equation (3.7). In the formula will appear
the indicators of absolute risk aversion and prudence marking how these influence the optimal investment level $\alpha(k)$ in the risky asset.

We will consider the second order Taylor approximation of $\alpha(k)$ around $k=0$.

$\alpha(k) \approx \alpha(0)+k\alpha'(0)+ \frac{1}{2} k^2 \alpha''(0)=k \alpha'(0)+ \frac{1}{2} k^2 \alpha''(0)$ (4.1)

For the approximate calculation of $\alpha(k)$ we will determine the approximate values of $\alpha'(k)$ and $\alpha''(k)$\footnote{The calculation of the approximate values of $\alpha'(0)$ and $\alpha''(0)$ follows an analogous line to the one used in \cite{niguez} in the analysis of the probabilistic model. In the proof of the approximate calculation formulas of $\alpha'(0)$ and $\alpha''(0)$ we will use the properties of the possibilistic expected utility from Subsection $2.1$.}. Before this we will recall the Arrow-Pratt index $r_u(w)$ and prudence index $P_u(w)$ associated with the utility function $u$:

$r_u(w)=-\frac{u''(w)}{u'(w)}$; $P_u(w)=-\frac{u'''(w)}{u''(w)}$. (4.2)

\begin{propozitie}
$\alpha'(0) \approx \frac{\mu}{E_f(A^2)}\frac{1}{r_u(w)}$.
\end{propozitie}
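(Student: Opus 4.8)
The plan is to differentiate the first-order condition (3.7) implicitly with respect to $k$ and then evaluate at $k=0$, using the hypothesis $\alpha(0)=0$. Starting from $E_f[(k\mu+A)u'(w+\alpha(k)(k\mu+A))]=0$, I would treat the left-hand side as a function of $k$ (with $\alpha=\alpha(k)$) and apply the linearity of $E_f$ from Proposition 2.1 and Corollary 2.2 to move the $\frac{d}{dk}$ operator inside $E_f$. The inner derivative of $(k\mu+A)u'(w+\alpha(k)(k\mu+A))$ by the product and chain rules produces a term $\mu\, u'(\cdots)$ plus a term $(k\mu+A)u''(\cdots)\cdot\big(\alpha'(k)(k\mu+A)+\alpha(k)\mu\big)$. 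So the differentiated identity reads
\begin{equation*}
E_f\big[\mu\, u'(w+\alpha(k)(k\mu+A))+(k\mu+A)u''(w+\alpha(k)(k\mu+A))(\alpha'(k)(k\mu+A)+\alpha(k)\mu)\big]=0.
\end{equation*}

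Next I would set $k=0$ and use $\alpha(0)=0$, which collapses $w+\alpha(0)(0\cdot\mu+A)$ to the constant $w$, so $u'$, $u''$ are evaluated at the deterministic point $w$ and come out of $E_f$ as constants (via Corollary 2.2). The surviving equation is
\begin{equation*}
\mu\, u'(w)\cdot E_f[1]+u''(w)\,\alpha'(0)\cdot E_f[A^2]=0,
\end{equation*}
since the $\alpha(0)\mu$ piece vanishes, $E_f[1]=1$ because $f$ integrates to $1$, and $E_f[A]=0$ by the standing hypothesis on $A$ kills the cross term that would otherwise appear. Solving gives $\alpha'(0)=-\dfrac{\mu\, u'(w)}{u''(w)E_f(A^2)}=\dfrac{\mu}{E_f(A^2)}\cdot\dfrac{1}{r_u(w)}$, using the definition $r_u(w)=-u''(w)/u'(w)$ from (4.2).

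The one genuinely delicate point is justifying why this is an approximation ($\approx$) rather than an exact identity: the equality $\alpha'(0)=\frac{\mu}{E_f(A^2)}\frac{1}{r_u(w)}$ is in fact exact under the stated differentiability, so the ``$\approx$'' presumably reflects that $\alpha(k)$ itself is only being handled up to the second-order Taylor expansion (4.1), or an implicit smallness assumption on the support of $A$; I would state this explicitly. The main obstacle, then, is not the algebra but the analytic bookkeeping: one must argue that $\alpha(k)$ is twice differentiable near $0$ (implicit function theorem applied to (3.7), using $V''<0$ which was already established to guarantee the relevant partial derivative is nonzero) and that differentiation may legitimately be passed under the $E_f$ integral. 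Both follow from the $\mathcal{C}^2$ assumption on $u$ and dominated convergence on the compact level-set integrals, so I would invoke them briefly and then present the computation above as the substance of the proof.
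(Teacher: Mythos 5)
Your proof is correct and arrives at the paper's formula, but by a noticeably different (and in fact cleaner) route. The paper first replaces $u'$ by its first-order Taylor polynomial around $w$, namely $u'(w+\alpha(k\mu+x))\approx u'(w)+\alpha(k\mu+x)u''(w)$, uses linearity of $E_f$ to turn the first-order condition into the approximate identity $u'(w)(k\mu+E_f(A))+\alpha(k)u''(w)E_f[(k\mu+A)^2]\approx 0$, and only then differentiates in $k$ and sets $k=0$; this Taylor step is the source of the ``$\approx$'' in the statement. You instead differentiate the exact condition (3.7) under the integral and evaluate at $k=0$, where $\alpha(0)=0$ collapses the arguments of $u'$ and $u''$ to the deterministic point $w$, so no expansion of $u'$ is needed and the resulting relation $\mu u'(w)+u''(w)\alpha'(0)E_f(A^2)=0$ is exact. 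Your observation that the formula for $\alpha'(0)$ is then an equality --- the approximation entering only through the truncated expansion (4.1) of $\alpha(k)$ itself (and, in Proposition 4.2, through $u'''$) --- is a correct and sharper reading than the paper's presentation. Two small points: at $k=0$ the second summand is exactly $u''(w)\alpha'(0)A^2$ because $\alpha(0)=0$, so $E_f(A)=0$ is not actually needed to kill any cross term at this step, contrary to what you suggest; and the implicit function theorem requires $V''(\alpha(k))\neq 0$, whereas the paper only establishes $V''\le 0$, so the strict inequality is an unstated standing assumption in both your argument and the paper's.
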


\begin{proof}
We consider the Taylor approximation:

$u'(w+\alpha(k\mu+x)) \approx u'(w)+\alpha(k\mu +x)u''(w)$

Then, by (3.6) and Proposition 2.1

$V'(\alpha) \approx E_f[(k \mu+A)(u'(w)+u''(w)\alpha (k\mu +A)]$

\hspace{1cm} $=u'(w)(k\mu +E_f(A))+\alpha u''(w)E_f[(k \mu +A)^2]$.

The equation $V'(\alpha(k))=0$ becomes

$u'(w)(k \mu+E_f(A))+\alpha(k)u''(w)E_f[(k\mu+A)]^2 \approx 0$.

We derive it with respect to $k$:

$u'(w)\mu +u''(w)(\alpha'(k)E_f[(k\mu +A)^2]+2\alpha(k)\mu E_f(k\mu+A))\approx 0$.

In this equality we make $k=0$. Taking into account that $\alpha(0)=0$ it follows

$u'(w)\mu+u''(w) \alpha'(0)E_f(A^2) \approx 0$

from where we determine $\alpha'(0)$:

$\alpha'(0) \approx - \frac{\mu}{E_f(A^2)} \frac{u'(w)}{u''(w)}=\frac{\mu}{E_f(A^2)} \frac{1}{r_u(w)}$

\end{proof}

\begin{propozitie}
$\alpha''(0) \approx \frac{P_u(w)}{(r_u(w))^2} \frac{E_f(A^3)}{(E_f(A^2))^3} \mu^2$.
\end{propozitie}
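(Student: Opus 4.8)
The plan is to push the Taylor expansion of $u'$ one order further than in Proposition 4.2, so that the third derivative $u'''(w)$ (hence the prudence index $P_u(w)$) enters. Concretely, I would start from
\[
u'(w+\alpha(k\mu+x)) \approx u'(w) + \alpha(k\mu+x)u''(w) + \tfrac{1}{2}\alpha^2(k\mu+x)^2 u'''(w),
\]
substitute $x = a_i(\gamma)$, and use (3.6) together with the linearity of $E_f$ (Proposition 2.1 and its Corollary) to write the first-order condition $V'(\alpha(k))=0$ as
\[
u'(w)\,E_f(k\mu+A) + \alpha(k)\,u''(w)\,E_f[(k\mu+A)^2] + \tfrac{1}{2}\alpha^2(k)\,u'''(w)\,E_f[(k\mu+A)^3] \approx 0.
\]

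Next I would differentiate this identity twice with respect to $k$. Each differentiation produces several terms via the product rule: terms in which the $\mu$ inside $k\mu+A$ is differentiated, and terms in which $\alpha(k)$ or $\alpha'(k)$ is differentiated. After forming the second derivative in $k$, I set $k=0$ and use the two facts already established: $\alpha(0)=0$ and $\alpha'(0) \approx \frac{\mu}{E_f(A^2)}\frac{1}{r_u(w)} = -\frac{\mu}{E_f(A^2)}\frac{u'(w)}{u''(w)}$. Because $\alpha(0)=0$, every surviving term must carry either a factor $\alpha'(0)$ (from the quadratic-in-$\alpha$ piece, differentiated once in $\alpha$ and once bringing down a $\mu$) or a factor $\alpha''(0)$ (from the linear-in-$\alpha$ piece $\alpha(k)u''(w)E_f[(k\mu+A)^2]$, differentiated twice so that $\alpha''$ appears while $k\mu+A$ is evaluated at $k=0$). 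The first term in the expansion, $u'(w)E_f(k\mu+A) = u'(w)(k\mu + E_f(A)) = u'(w)k\mu$ since $E_f(A)=0$, is linear in $k$ and so contributes nothing to the second derivative at $k=0$. Also at $k=0$ one has $E_f[(0\cdot\mu+A)^2]=E_f(A^2)$ and $E_f[(0\cdot\mu+A)^3]=E_f(A^3)$, which is how the third moment enters.

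Collecting the surviving terms at $k=0$ should give a linear equation of the schematic form
\[
u''(w)\,\alpha''(0)\,E_f(A^2) + \big(\text{constant}\big)\cdot u'''(w)\,(\alpha'(0))^2\,E_f(A^3) \approx 0,
\]
from which $\alpha''(0)$ is isolated as
\[
\alpha''(0) \approx -\big(\text{const}\big)\,\frac{u'''(w)}{u''(w)}\,\frac{(\alpha'(0))^2\,E_f(A^3)}{E_f(A^2)}.
\]
Substituting $\alpha'(0) \approx -\frac{\mu}{E_f(A^2)}\frac{u'(w)}{u''(w)}$ turns $(\alpha'(0))^2$ into $\frac{\mu^2}{(E_f(A^2))^2}\frac{(u'(w))^2}{(u''(w))^2}$, and combining $\frac{u'''(w)}{u''(w)}\cdot\frac{(u'(w))^2}{(u''(w))^2}$ with the definitions in (4.2) yields $\frac{P_u(w)}{(r_u(w))^2}$ after rewriting $\frac{(u'(w))^2}{(u''(w))^2} = \frac{1}{(r_u(w))^2}$ and $-\frac{u'''(w)}{u''(w)} = P_u(w)$. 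Then $E_f(A^3)$ divided by the $(E_f(A^2))^2$ from $(\alpha'(0))^2$ and the extra $E_f(A^2)$ from the linear coefficient produces $\frac{E_f(A^3)}{(E_f(A^2))^3}$, giving exactly the claimed formula (the schematic constant must come out to $1$, equivalently the combinatorial factors of differentiation must cancel the $\tfrac12$).

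The main obstacle I anticipate is purely bookkeeping: correctly enumerating which product-rule terms survive after two $k$-differentiations and the substitution $k=0$, $\alpha(0)=0$, and checking that the numerical constant collapses to $1$. Higher-order Taylor remainder terms (those involving $u^{(4)}$ and beyond) must also be argued away as being of higher order in $k$ once $\alpha(k)=O(k)$; since $\alpha(k) \approx k\alpha'(0)+\tfrac12 k^2\alpha''(0)$, any term with $\alpha^3$ or with $\alpha^2$ times an extra power of $k$ is $O(k^3)$ or smaller and drops out of the second-order Taylor coefficient. Organizing the differentiation so these negligible contributions are visibly of the right order — rather than getting lost among them — is the delicate part; the structure mirrors the probabilistic computation in \cite{niguez}, with $M$ replaced by $E_f$ and the linearity supplied by Proposition 2.1.
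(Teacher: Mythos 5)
Your plan coincides with the paper's proof: the same third-order Taylor expansion of $u'$, multiplication by $(k\mu+x)$, linearity of $E_f$ to get $V'(\alpha)\approx u'(w)E_f(k\mu+A)+u''(w)\alpha E_f[(k\mu+A)^2]+\tfrac{1}{2}u'''(w)\alpha^2E_f[(k\mu+A)^3]$, two differentiations in $k$, evaluation at $k=0$ using $\alpha(0)=0$ and $E_f(A)=0$, and substitution of $\alpha'(0)$. The surviving equation in the paper is exactly your schematic one with the constant equal to $1$ (the $\tfrac12$ cancels against the factor $2$ from differentiating $\alpha^2(k)$ twice), so the bookkeeping works out as you anticipated.
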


\begin{proof}
To determine the approximate value of $\alpha''(0)$ we start with the following Taylor approximation:

$u'(w+\alpha(k\mu+x)) \approx u'(w)+\alpha (k\mu +x)u''(w)+\frac{\alpha^2}{2}(k\mu+x)^2u'''(w)$

from which it follows:

$(k \mu+x) u'(w+\alpha(k\mu+x)) \approx u'(w)(k\mu +x) +u''(w) \alpha (k\mu+x)^2+ \frac{u'''(w)}{2}\alpha^2(k\mu+x)$.

Then, by (3.6) and the linearity of the $E_f(.)$ operator:

$V'(\alpha)=E_f[(k \mu+A)u'(w+\alpha (k\mu +A))]$

$\approx u'(w)E_f(k\mu+A)+u''(w)\alpha E_f[(k\mu+A)^2]+\frac{u'''(w)}{2}\alpha^2 E_f[(k\mu+A)^3]$

Using this approximation for $\alpha=\alpha(k)$, the equation $V'(\alpha(k))=0$ becomes

$u'(w)(k\mu+E_f(A))+u''(w)\alpha(k)E_f[(k\mu+A)^2]+\frac{u'''(w)}{2}(\alpha(k))^2E_f[(k\mu+A)^3] \approx 0$.

Deriving with respect to $k$ one obtains:

$\mu u'(w)+u''(w)[\alpha'(k)E_f((k\mu+A)^2)+2 \mu \alpha(k)E_f(k \mu+A)]+$

$+\frac{u'''(w)}{2} [2\alpha(k)\alpha'(k)E_f((k\mu+A)^3)+3(\alpha(k))^2 \mu E_f((k\mu+A)^2)] \approx 0$.

We derive one more time with respect to $k$:

$u''(w)[\alpha''(k)E_f((k\mu+A)^2)+2 \mu \alpha'(k)E_f(k \mu+A)+2 \mu \alpha'(k)E_f(k \mu+A)+$

$ 2 \mu^2 \alpha(k)]+ \frac{u'''(w)}{2}[2(\alpha'(k))^2 E_f((k\mu+A)^3)+2\alpha(k) \alpha''(k)
E_f((k\mu+A)^3)+$

$+6\alpha(k) \alpha'(k) E_f((k\mu+A)^2) + 6 \mu \alpha(k) \alpha'(k) E_f((k \mu +A)^2)+6 \mu^2 (\alpha(k)^2) E_f(k \mu +A)] \approx 0$

In the previous equation we take $k=0$.

$u''(w)[\alpha''(0)E_f(A^2)+ 2\mu \alpha'(0) E_f(A)+ 2 \mu \alpha'(0)E_f(A)+2 \mu^2 \alpha(0)]+$

$+\frac{u'''(w)}{2}[2 (\alpha'(0))^2 E_f(A^3)+2 \alpha(0) \alpha''(0)E_f(A^3) + 6 \alpha(0) \alpha'(0) E_f(A^2)+$

$6 \mu \alpha(0)E_f(A^2) + 6 \mu^2 (\alpha(0))^2 E_f(A)] \approx 0$.

Taking into account that $\alpha(0)=0$ and $E_f(A)=0$ one obtains

$u''(w)\alpha''(0)E_f(A^2) + u'''(w)(\alpha'(0))^2 E_f(A^3) \approx 0$

from where we get $\alpha"(0)$:

$\alpha''(0) \approx - \frac{u'''(w)}{u''(w)} \frac{E_f(A^3)}{E_f(A^2)} (\alpha'(0))^2 $.

By replacing $\alpha'(0)$ with the expression from Proposition 4.1 and taking into account (4.2) it follows:

$\alpha''(0)= \frac{P_u(w)}{((r_u(w))^2} \frac{E_f(A^3)}{(E_f(A^2))^3}\mu^2$

\end{proof}

We recall from Section 3 that $A=B-E_f(B)$. The following result gives us an approximate expression of $\alpha(k)$:

\begin{teorema}
$\alpha(k)\approx \frac{1}{r_u(w)} \frac{E_f(B)}{Var_f(B)}+\frac{1}{2} \frac{P_u(w)}{((r_u(w))^2} \frac{E_f[(B-E_f(B))^3]}{(Var_f(B))^3} (E_f(B))^2$.
\end{teorema}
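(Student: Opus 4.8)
The plan is to substitute the two approximate identities already obtained in Propositions 4.1 and 4.2 into the second-order Taylor expansion (4.1), and then to rewrite the result in terms of the excess return $B$ instead of the centered fuzzy number $A$. First I would recall the two structural facts from Section 3: by construction $B=k\mu+A$ with $E_f(A)=0$, so that $E_f(B)=k\mu$; hence $k\mu$ may be replaced by $E_f(B)$ and $(k\mu)^2$ by $(E_f(B))^2$ wherever they occur. Moreover $A=B-E_f(B)$, so $E_f(A^3)=E_f[(B-E_f(B))^3]$.

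Second, I would identify $E_f(A^2)$ with $Var_f(B)$. Since $E_f(A)=0$, formula (2.1.3) gives $Var_f(A)=E_f(A^2)$; and since $B$ and $A$ differ only by the constant $k\mu$, translation invariance of the second centered moment (a consequence of the linearity statements of Subsection 2.1, in particular Corollary 2.2) yields $Var_f(B)=Var_f(A)=E_f(A^2)$. Thus $(E_f(A^2))^3=(Var_f(B))^3$ as well.

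Third, I would plug $\alpha'(0)\approx \frac{\mu}{E_f(A^2)}\frac{1}{r_u(w)}$ and $\alpha''(0)\approx \frac{P_u(w)}{(r_u(w))^2}\frac{E_f(A^3)}{(E_f(A^2))^3}\mu^2$ into $\alpha(k)\approx k\alpha'(0)+\frac12 k^2\alpha''(0)$. The linear term becomes $k\alpha'(0)\approx \frac{k\mu}{r_u(w)\,E_f(A^2)}=\frac{1}{r_u(w)}\frac{E_f(B)}{Var_f(B)}$, and the quadratic term becomes $\frac12 k^2\alpha''(0)\approx \frac12\frac{P_u(w)}{(r_u(w))^2}\frac{E_f(A^3)}{(E_f(A^2))^3}(k\mu)^2=\frac12\frac{P_u(w)}{(r_u(w))^2}\frac{E_f[(B-E_f(B))^3]}{(Var_f(B))^3}(E_f(B))^2$. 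Adding the two gives precisely the asserted formula.

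There is no genuine analytic obstacle here: the theorem is a bookkeeping assembly of the two preceding propositions together with the change of centering $A\leftrightarrow B$. The one point I would spell out carefully is the justification that $Var_f$ (and the third centered moment) are translation invariant, so that replacing $A=B-E_f(B)$ is legitimate; this is where I would invoke Proposition 2.1 / Corollary 2.2 explicitly. A minor additional remark is that composing the Taylor approximation (4.1) with the approximate values of $\alpha'(0)$ and $\alpha''(0)$ only introduces errors of order $k^3$ and higher, which is consistent with the second-order level of the whole computation.
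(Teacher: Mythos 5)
Your proposal is correct and follows essentially the same route as the paper: substitute the approximations of $\alpha'(0)$ and $\alpha''(0)$ from Propositions 4.1 and 4.2 into the Taylor expansion (4.1), then rewrite via $k\mu=E_f(B)$, $E_f(A^2)=E_f[(B-E_f(B))^2]=Var_f(B)$ and $E_f(A^3)=E_f[(B-E_f(B))^3]$. The only cosmetic difference is that the paper reads $E_f(A^2)=Var_f(B)$ directly off the definition $A=B-E_f(B)$ rather than invoking translation invariance of the centered moments as a separate step.
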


\begin{proof}
By replacing in (4.1) the approximate values of $\alpha'(0)$ and $\alpha''(0)$ given by Propositions  4.1 and 4.2 and taking into account that $E_f(B)=k \mu$ one obtains:

$\alpha(k) \approx k \alpha'(0) +\frac{1}{2} k^2 \alpha''(0)$

$=\frac{k\mu}{E_f(A^2)} \frac{1}{r_u(w)} +\frac{1}{2} (k \mu)^2 \frac{P_u(w)}{(r_u(w))^2} \frac{E_f(A^3)}{(E_f(A^2))^3}$

$=\frac{E_f(B)}{E_f(A^2)} \frac{1}{r_u(w)} + \frac{1}{2} (E_f(B))^2 \frac{P_u(w)}{(r_u(w))^2}\frac{E_f(A^3)}{(E_f(A^2))^3}$.

But $E_f(A^2)=E_f[(B-E_f(B))^2]=Var_f(B)$.  Then

$\alpha(k) \approx \frac{1}{r_u(w)} \frac{E_f(B)}{Var_f(B)} +\frac{1}{2} \frac{P_u(w)}{(r_u(w))^2} \frac{E_f[(B-E_f(B))^3]}{(Var_f(B))^3} (E_f(B))^2$.
\end{proof}

\begin{remarca}
The previous proposition gives us an approximate solution of the maximization problem $\displaystyle \max_{\alpha} V(\alpha)$ with respect to the indices of absolute risk aversion and prudence $r_u(w)$, $P_u(w)$, and the first three possibilistic moments $E_f(B)$, $Var_f(B)$ and $E_f[(B-E_f(B))^3]$.

This result can be seen as a possibilistic version of the formula (A.6) of \cite{niguez} which gives us the optimal allocation of investment in the context of a probabilistic portfolio choice model.
\end{remarca}

\begin{exemplu}\normalfont
We consider the triangular fuzzy number $B$ defined by:

$B(t)=\left\{\begin{array}{rcl}
1-\frac{b-x}{\alpha}& \mbox{if}& b -\alpha \leq x \leq  b \\
1- \frac{x-b}{\beta} & \mbox{if}& b \leq x \leq b+ \beta\\
0 & \mbox{otherwise}&
\end{array}\right.$

The level sets of $B$ are $[B]^\gamma=[b_1(\gamma), b_2(\gamma)]$, where $b_1(\gamma)=b-(1-\gamma)\alpha$ and $b_2(\gamma)=b+(1-\gamma) \beta$, for $\gamma \in [0, 1]$. We assume that the weighting function $f$ has the form $f(\gamma)=2\gamma$, for $\gamma \in [0, 1]$. Then, by \cite{thavaneswaran2}, Lemma 2.1:

$E_f(B)=b+\frac{\beta-\alpha}{6}$; $Var_f(B)=\frac{\alpha^2+\beta^2+\alpha \beta}{18}$;

$E_f[(B-E_f(B))^2]=\int_0^1 \gamma [ (b_1(\gamma)-E_f(B))^2 +(b_2(\gamma)-E_f(B))^2] d\gamma$

\hspace{2.6cm} $=\frac{19(\beta^3-\alpha^3)}{1080}+\frac{\alpha \beta (\beta-\alpha)}{72}$.

By replacing these indicators in the formula of Theorem 4.3, we obtain

$\alpha(k) \approx \frac{1}{r_u(w)} \frac{b+\frac{\beta-\alpha}{6}}{\frac{\alpha^2+\beta^2+\alpha \beta }{18}} + \frac{1}{2} \frac{P_u(w)}{((r_u(w))^2}
\frac{\frac{19(\beta^3-\alpha^3)}{1080}+ \frac{\alpha \beta (\beta-\alpha)}{72}}{(\frac{\alpha^2+\beta^2+ \alpha \beta}{18})^3} (b+\frac{\beta-\alpha}{6})^2.$

Assume that the utility function $u$ is HARA-type (see \cite{gollier}, Section 3.6):

$u(w)=\zeta (\eta+w)^{1-\gamma}$ for $\eta +\frac{w}{\gamma}>0$.

$r_u(w)=(\eta+\frac{w}{\gamma})^{-1}$; $P_u(w)=\frac{\gamma+1}{\gamma}(\eta+\frac{w}{\gamma})^{-1}$

Then, according to \cite{gollier}, Section 3.6:

$\frac{1}{r_u(w)}=\eta+\frac{w}{\gamma}$ and $\frac{P_u(w)}{((r_u(w))^2}=\frac{\frac{\gamma+1}{\gamma}(\eta+\frac{w}{\gamma})^{-1}}{(\eta+\frac{w}{\gamma})^{-2}}=\frac{\gamma+1}{\gamma}(\eta+ \frac{w}{\gamma})$.

Replacing in the approximation calculation formula of $\alpha(k)$, it follows:

$\alpha(k) \approx (\eta+\frac{w}{\gamma})\frac{b+\frac{\beta-\alpha}{6}}{\frac{\alpha^2+\beta^2+\alpha \beta }{18}}+ \frac{1}{2} \frac{\gamma+1}{\gamma}(\eta+ \frac{w}{\gamma}) \frac{\frac{19(\beta^3-\alpha^3)}{1080}+ \frac{\alpha \beta (\beta-\alpha)}{72}}{(\frac{\alpha^2+\beta^2+ \alpha \beta}{18})^3} (b+\frac{\beta-\alpha}{6})^2$

\end{exemplu}

\section{Possibilistic portfolio choice model with probabilistic background risk}

The portfolio choice model from the previous sections has been built on the hypothesis that the return of the risky asset is modeled by a fuzzy number.
In this section we will study a mixed model in which, besides this possibilistic risk, a probabilistic background risk may appear, modeled by a random variable $Z$. This mixed model comes from the possibilistic standard model by adding $Z$ in the composition of the total utility function. More precisely, the total utility function $W(\alpha)$ will have the following form:

$W(\alpha)=E_f[u(w+ \alpha(k\mu+A)+Z)]$         (5.1)

where the other components of the model have the same meaning as in Section $3$.

Assume that the level sets of $A$ are $[A]^\alpha=[a_1(\gamma), a_2(\gamma)]$, $\gamma \in [0, 1]$. By the definition (2.2.1) of the mixed expected utility,
formula (5.1) can be written:

$W(\alpha)=\frac{1}{2} \int_0^1 [M(u(w+\alpha(k\mu+a_1(\gamma))+Z))+M(u(w+\alpha(k\mu+a_2(\gamma))+Z))]f(\gamma)d\gamma$.

We derive $W(\alpha)$:

$W'(\alpha)=\frac{1}{2} \int_0^1 (k \mu +a_1(\gamma))M(u'(w+\alpha(k\mu+a_1(\gamma))+Z))f(\gamma)d\gamma+$

\hspace{1.5cm} $+ \frac{1}{2} \int_0^1 k \mu +a_2(\gamma))M(u'(w+\alpha(k\mu+a_2(\gamma))+Z))f(\gamma)d\gamma$

$W'(\alpha)$ can be written as:

$W'(\alpha)=E_f[(k\mu+A)u'(w+\alpha(k \mu+A)+Z)]$.  (5.2)

By deriving one more time we obtain:

$W''(\alpha)=E_f[( k \mu+A)^2u''(w+\alpha(k \mu+A)+Z)]$

Since $u''\leq 0$ it follows $W''(\alpha) \leq 0$ thus $W$ is concave. Then the solution $\beta(k)$ of the optimization problem $\displaystyle \max_{\alpha} W(\alpha)$ will be given by $W'(\beta(k))=0$. By (5.2)

$E_f[(k \mu+A)u'(w+\beta(k)(k \mu+A)+Z)]=0$ (5.3)

In this case we will also make the natural hypothesis $\beta(0)=0$.

To compute an approximate value of $\beta(k)$ we will write the second order Taylor approximation of $\beta(k)$ around $k=0$.

$\beta(k) \approx \beta(0) + k \beta'(0)+ \frac{1}{2} k^2 \beta''(0) =k \beta'(0) +\frac{1}{2} k^2 \beta''(0)$  (5.4)

We propose to find some approximate values of $\beta'(0)$ and $\beta''(0)$.

\begin{propozitie}
$\beta'(0) \approx \frac{\mu}{E_f(A^2)} (\frac{1}{r_u(w)}-M(Z))$
\end{propozitie}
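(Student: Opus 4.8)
The plan is to mimic the proof of Proposition 4.1, but now carrying the background risk $Z$ through the mixed expected utility operator. First I would take the first-order Taylor approximation of $u'$ about $w$, treating $\alpha(k\mu+x)+Z$ as the perturbation,
\[
u'(w+\alpha(k\mu+x)+Z) \approx u'(w) + (\alpha(k\mu+x)+Z)u''(w).
\]
Multiplying by $(k\mu+x)$, substituting the fuzzy number $A$ for $x$, applying $E_f(\cdot)$ to (5.2), and using the linearity of the mixed expected utility (Proposition 2.4) gives
\[
W'(\alpha) \approx u'(w)E_f(k\mu+A) + \alpha u''(w)E_f[(k\mu+A)^2] + u''(w)E_f[(k\mu+A)Z].
\]

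The key step is to simplify the cross term $E_f[(k\mu+A)Z]$. Applying Corollary 2.6 (with the fuzzy number $k\mu+A$ in the role of $A$, or directly $u(x,z)=xz$ in (2.2.1)), this equals $M(Z)E_f(k\mu+A)=k\mu M(Z)$, since $E_f(A)=0$ and hence $E_f(k\mu+A)=k\mu$; the term $E_f[(k\mu+A)^2]$ carries no randomness. Thus
\[
W'(\alpha) \approx k\mu\,\bigl(u'(w)+M(Z)u''(w)\bigr) + \alpha u''(w)E_f[(k\mu+A)^2].
\]
Then I would impose the first-order condition $W'(\beta(k))=0$, differentiate once with respect to $k$, and set $k=0$. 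Using $\beta(0)=0$, every term proportional to $\beta(0)$ vanishes and $E_f[(k\mu+A)^2]$ collapses to $E_f(A^2)$, leaving
\[
\mu\bigl(u'(w)+M(Z)u''(w)\bigr) + u''(w)\beta'(0)E_f(A^2) \approx 0.
\]
Solving for $\beta'(0)$ and invoking $r_u(w)=-u''(w)/u'(w)$ from (4.2) yields
\[
\beta'(0) \approx -\frac{\mu}{E_f(A^2)}\left(\frac{u'(w)}{u''(w)}+M(Z)\right) = \frac{\mu}{E_f(A^2)}\left(\frac{1}{r_u(w)}-M(Z)\right),
\]
which is the claim.

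I do not anticipate a real obstacle here: the argument is a routine perturbation computation, and the only point requiring genuine care is the bookkeeping that separates the fuzzy variable $A$ from the random variable $Z$ inside $E_f(\cdot)$, which is precisely what Proposition 2.4 and Corollary 2.6 are designed to handle. A useful consistency check is that putting $Z\equiv 0$ (so $M(Z)=0$) recovers Proposition 4.1 exactly.
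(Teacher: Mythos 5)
Your argument is correct and follows essentially the same route as the paper: first-order Taylor expansion of $u'$, linearity of the mixed expected utility, the first-order condition, differentiation in $k$, and evaluation at $k=0$ with $\beta(0)=0$. The only differences are cosmetic — you simplify $E_f[(k\mu+A)Z]$ to $k\mu M(Z)$ before differentiating whereas the paper differentiates first, and the result you invoke for this is Corollary 2.5 (not 2.6).
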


\begin{proof}
We consider the Taylor approximation:

$u'(w+\alpha(k \mu+x)+z) \approx u'(w)+(\alpha(k \mu +x)+z)u''(w)$

Then

$(k \mu+x) u'(w+\alpha(k \mu +x)+ z)\approx u'(w)(k \mu +x)+u''(w)\alpha(k \mu +x)^2 +u''(w)z(k \mu +x)$

From this relation, from (5.2) and the linearity of mixed expected utility it follows:

$W'(\alpha) \approx u'(w)(k \mu +E_f(A))+u''(w) \alpha E_f[(k \mu +A)^2]+u''(w) E_f[(k \mu+A)Z]$.

Then the equation $W'(\beta(k))=0$ will be written

$u'(w)(k \mu +E_f(A))+u''(w)\beta(k) E_f[(k \mu+A)^2] +u''(w) E_f[(k \mu +A)Z] \approx 0$.

By deriving with respect to $k$ one obtains:

$u'(w)\mu +u''(w)(\beta'(k)E_f[(k\mu +A)^2]+ 2\beta(k) \mu E_f(k \mu +A)) +u''(w)\mu M(Z) \approx 0$.

For $k=0$ it follows

$u'(w)\mu+u''(w)\mu M(Z)+u''(w)\beta'(0)E_f(A^2) \approx0$.

from where $\beta'(0)$ is obtained:

$\beta'(0) \approx - \frac{(u'(w)+u''(w)M(Z))\mu}{u''(w)E_f(A^2)}=\frac{\mu}{E_f(A^2)}(\frac{1}{r_u(w)}-M(Z))$
\end{proof}

\begin{propozitie}
$\beta''(0)\approx \frac{P_u(w)(\beta'(0))^2}{Var_f(B)} \frac{E_f[(B-E_f(B))^3]}{1-M(Z)P_u(w)}$

\end{propozitie}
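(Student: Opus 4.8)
The plan is to follow the proof of Proposition 4.2 step by step, carrying along the extra terms generated by the background risk $Z$. First I would take the third order Taylor approximation
\[u'(w+\alpha(k\mu+x)+z)\approx u'(w)+(\alpha(k\mu+x)+z)u''(w)+\frac{1}{2}(\alpha(k\mu+x)+z)^2u'''(w),\]
multiply through by $(k\mu+x)$, expand the square as $(\alpha(k\mu+x)+z)^2=\alpha^2(k\mu+x)^2+2\alpha z(k\mu+x)+z^2$, and apply the operator $E_f(\cdot)$ with $x=A$. Using the linearity of the mixed expected utility (Proposition 2.4), Corollary 2.6, and the analogue $E_f((k\mu+A)Z^2)=M(Z^2)E_f(k\mu+A)$ (take $u(x,z)=xz^2$ in (2.2.1)), and abbreviating $\xi=k\mu+A$, this gives
\[W'(\alpha)\approx u'(w)E_f(\xi)+u''(w)\Big(\alpha E_f(\xi^2)+M(Z)E_f(\xi)\Big)+\frac{u'''(w)}{2}\Big(\alpha^2E_f(\xi^3)+2\alpha M(Z)E_f(\xi^2)+M(Z^2)E_f(\xi)\Big).\]

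Next I would put $\alpha=\beta(k)$, invoke the first order condition $W'(\beta(k))=0$, and differentiate the resulting identity twice with respect to $k$. Since $E_f(A)=0$, linearity of $E_f$ gives $E_f(\xi)=k\mu$, $E_f(\xi^2)=k^2\mu^2+E_f(A^2)$ and $E_f(\xi^3)=k^3\mu^3+3k\mu E_f(A^2)+E_f(A^3)$, so their $k$-derivatives at $0$ are immediate. Setting $k=0$ and using $\beta(0)=0$ together with $E_f(A)=0$, all but three terms vanish, leaving
\[u''(w)\beta''(0)E_f(A^2)+u'''(w)(\beta'(0))^2E_f(A^3)+u'''(w)M(Z)\beta''(0)E_f(A^2)\approx 0.\]
The third summand, which has no counterpart in Proposition 4.2, arises by differentiating the cross term $2\alpha M(Z)E_f(\xi^2)$ twice; the $M(Z^2)$ term contributes nothing, since it multiplies $E_f(\xi)=k\mu$, whose second derivative vanishes.

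Finally I would solve this linear equation for $\beta''(0)$, first gathering the two $\beta''(0)$ terms:
\[\beta''(0)\approx-\frac{u'''(w)(\beta'(0))^2E_f(A^3)}{E_f(A^2)\big(u''(w)+u'''(w)M(Z)\big)}.\]
Substituting $u'''(w)=-P_u(w)u''(w)$ from (4.2) and cancelling $u''(w)$ turns the denominator into $E_f(A^2)(1-M(Z)P_u(w))$ and the numerator into $P_u(w)(\beta'(0))^2E_f(A^3)$; rewriting $E_f(A^2)=Var_f(B)$ and $E_f(A^3)=E_f[(B-E_f(B))^3]$ (recall $A=B-E_f(B)$) yields exactly the claimed formula.

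The routine but bulky part is the double differentiation in $k$ and the bookkeeping of which terms drop out at $k=0$. The only genuinely new point compared with Proposition 4.2 is that the term $u'''(w)M(Z)\beta''(0)E_f(A^2)$ must be collected together with $u''(w)\beta''(0)E_f(A^2)$ before solving for $\beta''(0)$, and this is exactly what produces the denominator $1-M(Z)P_u(w)$.
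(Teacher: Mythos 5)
Your proposal is correct and follows essentially the same route as the paper: a third-order Taylor expansion of $u'$, substitution into the first-order condition, double differentiation in $k$, and evaluation at $k=0$ using $\beta(0)=0$ and $E_f(A)=0$, with the factorization $E_f(\xi^2Z)=M(Z)E_f(\xi^2)$ supplying the extra $u'''(w)M(Z)\beta''(0)E_f(A^2)$ term that yields the denominator $1-M(Z)P_u(w)$. The only (harmless) difference is bookkeeping: you apply the Corollary 2.5--type identities to reduce everything to possibilistic moments before differentiating, whereas the paper keeps the mixed expectations symbolic inside auxiliary functions $g(k)$, $h(k)$ and invokes Corollary 2.5 only at the end.
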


\begin{proof}
We consider the Taylor approximation

$u'(w+\alpha(k \mu+x)+z) \approx u'(w)+u''(w) [\alpha(k \mu+x)+z] +\frac{1}{2} u'''(w) [\alpha(k \mu +x)+z]^2$

from where it follows

$(k \mu+x)u'(w+\alpha(k \mu + x)+z) \approx u'(w)(k \mu +x)+u''(w)(k \mu +x)[\alpha(k \mu+x)+z]$

$+\frac{1}{2}u'''(w)(k \mu+x) [\alpha (k \mu+x)+z]^2$.

By (5.2), the previous relation and the linearity of mixed expected utility, we will have

$W'(\alpha) \approx u'(w)(k \mu +E_f(A))+u''(w)E_f[(k \mu +A)(\alpha(k \mu+A)+Z)]+$

$+\frac{1}{2} u'''(w) E_f[(k \mu+A)(\alpha (k \mu+A)+Z)^2]$

Then from $W'(\beta(k))=0$ we will deduce:

$u'(w)(k \mu+E_f(A))+u''(w)E_f[(k \mu +A)(\beta(k)(k \mu+A)+Z)]+$

$+\frac{1}{2}u'''(w)E_f[(k \mu+A)(\beta(k)(k \mu+A)+Z)^2] \approx 0$.

If we denote

$g(k)=E_f[(k \mu+A)(\beta(k)(k \mu+A)+Z)]$  (5.5)

$h(k)=E_f[(k \mu+A)(\beta(k)(k \mu+A)+Z)^2]$  (5.6)

then the previous relation can be written

$u'(w)(k \mu+E_f(A))+u''(w)g(k)+ \frac{1}{2} u'''(w)h(k) \approx 0$

Deriving twice with respect to $k$ we obtain:

$u''(w)g''(k)+\frac{1}{2}u'''(w)h''(k) \approx 0$ (5.7)

We set $k=0$ in (5.6):

$u''(w)g''(0) +\frac{1}{2}u'''(w)h''(0) \approx 0$  (5.8)

{\emph{The computation of $g''(0)$}}  We notice that

$g(k)=\beta(k)E_f[(k\mu+A)^2] +E_f[(k \mu+A)Z]$

By denoting $g_1(k)=\beta(k) E_f[(k\mu +A)^2]$ and $g_2(k)=E_f[(k \mu +A)Z]$ we will have $g(k)=g_1(k)+g_2(k)$. One easily sees that
$g_2''(k)=0$, thus $g''(k)=g_1''(k)+g_2''(k)=g_1''(k)$. We derive $g_1(k)$:

$g_1'(k)=\beta'(k)E_f[(k\mu+A)^2]+2\mu \beta(k)E_f(k \mu+A)$

$=\beta'(k) E_f[(k \mu+A)^2]+2 \mu^2 k \beta(k)$

since $E_f(k \mu+A)=k \mu+E_f(A)=k\mu$. We derive one more time

$g_1''(k)=\beta''(k)E_f[(k \mu+A)^2]+2 \mu\beta'(k) E_f(k \mu+A)+2 \mu^2 [\beta(k)+k \beta'(k)]$

Setting $k=0$ in the previous relation and taking into account that $\beta(0)=E_f(A)=0$ it follows

$g''(0)=\beta''(0)E_f(A^2)$ (5.9)

{\emph{The computation of $h''(0)$}}  We write $h(k)$ as

$h(k)=\beta^2(k) E_f[(k \mu+A)^3] +2 \beta(k) E_f[(k \mu+A)^2Z] +E_f[(k \mu+A)Z^2]$

We denote

$h_1(k)=\beta^2(k) E_f[(k \mu+A)^3]$

$h_2(k)=\beta(k) E_f[(k \mu+A)^2 Z]$

$h_3(k)=E_f[(k \mu+A) Z^2]$

Then $h(k)=h_1(k)+h_2(k)+h_3(k)$. One notices that $h_3''(0)=0$, thus

$h''(0)=h_1''(0)+2 h_2''(0)$    (5.10)

We compute first $h_2''(0)$. One can easily notice that

$h_2''(k)=\beta''(k)E_f[(k \mu+A)^2Z]+2 \beta'(k) \frac{d}{dk}E_f[(k\mu+ A)^2Z]+\beta(k)\frac{d^2}{dk^2}E_f[(k\mu+A)^2Z]$.

Taking into account that

$\frac{d}{dk} E_f[(k \mu+A)^2Z]=2 \mu E_f[(k \mu+A)Z]$ 

and $\beta(0)=0$ 

we deduce

$h_2''(0)=\beta''(0) E_f(A^2Z)+4 \mu \beta'(0)E_f(AZ)$   (5.11)

We will compute $h_1''(0)$. We derive twice $h_1(k)$:

$h_1''(k)=\frac{d^2}{dk^2}(\beta^2(k))E_f[(k \mu+A)^3]+2 \frac{d}{dk}(\beta^2(k))\frac{d}{dk}E_f[(k\mu+A)^3]+$

$+\beta^2(k) \frac{d^2}{dk^2} E_f[(k\mu+A)^3]$.

We compute the following derivatives from the last sum:

$\frac{d}{dk}(\beta^2(k))=2 \beta(k) \beta'(k)$;

$\frac{d^2}{dk^2}(\beta^2(k))=2 [\beta''(k) \beta(k)+(\beta'(k))^2]$;

$\frac{d}{dk} E_f[(k \mu+A)^3]=3 \mu E_f[(k\mu +A)^2]$

Then taking into account $\beta(0)=0$:

$h_1''(0)=2[\beta''(0)\beta(0)+(\beta'(0))^2]E_f(A^3)+2 \beta(0)\beta'(0) 3 \mu E_f[(k\mu +A)^2]+\beta^2(0)=\frac{d}{dk} E_f[(k \mu+A)^3]$

$=2 (\beta'(0))^2E_f(A^3)$    (5.12)

By (5.10), (5.11) and (5.12):

$h''(0)=h_1''(0)+2h_2''(0)=$

$=2(\beta'(0))^2E_f(A^3)+2 (\beta''(0)E_f(A^2Z)+4 \mu \beta'(0) E_f(AZ))$  (5.13)

Replacing in (5.7) the values of $g''(0)$ and $h''(0)$ given by (5.9) and (5.13):

$u''(w)\beta''(0)E_f(A^2)+ \frac{1}{2} u'''(w)[2 (\beta'(0))^2  E_f(A^3)+2 (\beta''(0) E_f(A^2Z)+4 \mu \beta'(0) E_f(AZ))] \approx 0$

from where

$\beta''(0)[u''(w)E_f(A^2) +u'''(w) E_f(A^2 Z)] \approx$

$\approx -\beta'(0) u'''(w)[\beta'(0)E_f(A^3)+4 \mu E_f(AZ)]$

The approximate value of $\beta''(0)$ follows:

$\beta''(0) \approx - u'''(w) \beta'(0) \frac{\beta'(0)E_f(A^3)+4 \mu E_f(AZ)}{u''(w)E_f(A^2)+u'''(w)E_f(A^2Z)}$

According to Corollary 2.5, the expression above which approximates $\beta''(0)$ can be written:

$\beta''(0) \approx - u'''(w) \beta'(0) \frac{\beta'(0) E_f(A^3)+4 \mu M(Z) E_f(A)}{u''(w) E_f(A^2)+M(Z) E_f(A^2)}$

$= - \frac{u'''(w)\beta'(0)}{E_f(A^2)} \frac{\beta'(0) E_f(A^3)}{u''(w)+M(Z)u'''(w)}$

since $E_f(A)=0$. If we replace $A$ with $B-E_f(B)$ one obtains

$\beta''(0) \approx -\frac{(\beta'(0))^2 u'''(w)}{Var_f(B)}\frac{E_f[(B-E_f(B))^3]}{u''(w)+M(Z)u'''(w)}$

\hspace{1.25cm} $=\frac{P_u(w)(\beta'(0))^2}{Var_f(B)} \frac{E_f[(B-E_f(B))^3]}{1-M(Z)P_u(w)}$
\end{proof}

\begin{teorema}
$\beta(k) \approx \frac{E_f(B)}{Var_f(B)} [\frac{1}{r_u(w)}-M(Z)]+$

$\frac{1}{2} P_u(w) [\frac{1}{r_u(w)}-M(Z)]^2 \frac{E_f^2(B)E_f[(B-E_f(B))^3]}{Var_f^3(B)[1-M(Z)P_u(w)]}$
\end{teorema}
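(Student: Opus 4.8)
The plan is to mirror the proof of Theorem 4.3: substitute the approximate expressions for $\beta'(0)$ and $\beta''(0)$ supplied by Propositions 5.1 and 5.2 into the second order Taylor expansion (5.4), namely $\beta(k) \approx k\beta'(0) + \frac{1}{2}k^2\beta''(0)$, and then rewrite every quantity in terms of the possibilistic moments of $B$. The translation dictionary is the one already used in Section 4: since $A = B - E_f(B)$ we have $E_f(A^2) = Var_f(B)$ and $E_f(A^3) = E_f[(B - E_f(B))^3]$, and since $B = k\mu + A$ with $E_f(A)=0$ we have $E_f(B) = k\mu$. No new ingredient beyond the two preceding propositions is needed.

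First I would handle the linear term. From Proposition 5.1, $k\beta'(0) \approx \frac{k\mu}{E_f(A^2)}\bigl(\frac{1}{r_u(w)} - M(Z)\bigr)$; substituting $k\mu = E_f(B)$ and $E_f(A^2) = Var_f(B)$ turns this into $\frac{E_f(B)}{Var_f(B)}\bigl(\frac{1}{r_u(w)} - M(Z)\bigr)$, which is exactly the first summand in the claimed formula.

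Next I would treat the quadratic term. Squaring the Proposition 5.1 expression gives $k^2(\beta'(0))^2 \approx \frac{\mu^2 k^2}{(E_f(A^2))^2}\bigl(\frac{1}{r_u(w)} - M(Z)\bigr)^2 = \frac{(E_f(B))^2}{(Var_f(B))^2}\bigl(\frac{1}{r_u(w)} - M(Z)\bigr)^2$. Inserting this into the expression for $\beta''(0)$ from Proposition 5.2, $\beta''(0) \approx \frac{P_u(w)(\beta'(0))^2}{Var_f(B)}\cdot\frac{E_f[(B-E_f(B))^3]}{1-M(Z)P_u(w)}$, one picks up one more power of $Var_f(B)$ in the denominator, so that $\frac{1}{2}k^2\beta''(0)$ becomes $\frac{1}{2}P_u(w)\bigl(\frac{1}{r_u(w)} - M(Z)\bigr)^2\frac{(E_f(B))^2\,E_f[(B-E_f(B))^3]}{(Var_f(B))^3[1-M(Z)P_u(w)]}$, which is the second summand. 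Adding the two summands gives the statement.

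I do not expect a genuine obstacle: the only thing requiring care is the bookkeeping of the powers of $Var_f(B)$ — two of them coming from squaring $\beta'(0)$ and one from the denominator of Proposition 5.2, for a total of three — together with keeping the factor $1 - M(Z)P_u(w)$ untouched and remembering to retain the full bracket $\frac{1}{r_u(w)} - M(Z)$ rather than expanding it. All the substantive work (the Taylor expansions, the decompositions of $g(k)$ and $h(k)$ in (5.5)–(5.6) into pieces whose relevant derivatives vanish at $k=0$, and the use of Corollary 2.5 to replace $E_f(A^2 Z)$ by $M(Z)E_f(A^2)$) has already been carried out in Propositions 5.1 and 5.2, so Theorem 5.3 is purely their assembly.
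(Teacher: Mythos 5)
Your proposal is correct and follows essentially the same route as the paper's own proof: both substitute the approximations of $\beta'(0)$ and $\beta''(0)$ from Propositions 5.1 and 5.2 into the Taylor expansion (5.4), use $k\mu = E_f(B)$, $E_f(A^2)=Var_f(B)$ and $E_f(A^3)=E_f[(B-E_f(B))^3]$, and track the three powers of $Var_f(B)$ in the quadratic term. Nothing is missing.
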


\begin{proof}
The approximation formula of $\beta'(0)$ from Proposition 5.1 can be written:

$\beta'(0) \approx \frac{\mu}{Var_f(B)}[ \frac{1}{r_u(w)}-M(Z)]$ (5.14)

According to (5.4), (5.13) and Proposition 5.2

$\beta(k) \approx k \beta'(0) +\frac{1}{2} k^2 \beta''(0)$

\hspace{1cm} $=\frac{\mu k}{Var_f(B)}[\frac{1}{r_u(w)}-M(Z)]+$

\hspace{1cm}$+ \frac{1}{2} P_u(w)\frac{(k\mu)^2E_f[(B-E_f(B))^3]}{Var_f^3(B)[1-M(Z)P_u(w)]}[\frac{1}{r_u(w)}-M(Z)]^2$

Since $\mu k=E_f(B)$ it follows

$\beta(k) \approx \frac{E_f(B)}{Var_f(B)}[\frac{1}{r_u(w)}-M(Z)]+$

$\frac{1}{2} P_u(w)[\frac{1}{r_u(w)}-M(Z)]^2\frac{E_f^2(B)E_f[(B-E_f(B))^3]}{Var_f^3(B)[1-M(Z)P_u(w)]}$

\end{proof}

\begin{remarca}
In the approximate expression of $\beta(k)$ from the previous theorem appear the Arrow index and the prudence index of the utility function $u$, the possibilistic indicators $E_f(B)$, $Var_f(B)$ and the possibilistic expected value $M(Z)$.
\end{remarca}

\end{document}